\documentclass{article}
\usepackage{amsmath,amssymb,amsthm,physics,braket}
\newtheorem{lemma}{Lemma}

\newtheorem{proposition}{Proposition}

\title{Quantum Markovian Dynamics from a Double Covariance Stochastic Framework}
\author{Andrei Khrennikov$^{1,*}$ and Irina Basieva$^{2}$}
 
\date{}

\begin{document}

\maketitle
\noindent
$^{1,*}$ Center for Mathematical Modeling 
in Physics and Cognitive Sciences\\
Linnaeus University, V\"axj\"o, SE-351 95, Sweden\\
$^{2}$Vilnius Gediminas Technical University(VILNIUS TECH), 
Vilnius, 10223, Lithuania\\
*Corresponding author email: Andrei.Khrennikov@lnu.se

\abstract{We develop an interacting extension of the Double Covariance Model (DCM), a stochastic subquantum framework in which macroscopic quantum dynamics emerge through coarse-graining of correlated microscopic fluctuations. Starting from local stochastic differential equations on subsystem Hilbert spaces, we derive a closed evolution equation for a coarse-grained double covariance operator using multi-scale It\^o calculus and sliding-window averaging.
The construction explicitly incorporates two separated temporal scales: a fast microscopic fluctuation scale governing subquantum stochastic processes and a slower macroscopic observation scale associated with coarse-grained dynamics. Within the hydrodynamic limit, where the ratio between microscopic correlation time and averaging-window scale vanishes, rapidly fluctuating corrections disappear and the effective dynamics converges to a deterministic macroscopic transport equation.
We show that the emergent macroscopic dynamics has the exact {\it Gorini-Kossakowski-Sudarshan-Lindblad} (GKSL) form: coherent Hamiltonian evolution arises from deterministic subquantum flow, while dissipative channels emerge from quadratic noise correlations. The framework further demonstrates how non-separable interaction Hamiltonians can arise from strictly local, state-dependent stochastic feedback fields. In the fluctuation-free limit, the model reduces naturally to the standard von Neumann equation, providing a unified stochastic foundation for both open and closed quantum dynamics.}

{\bf keywords:} Double Covariance Model; covarince and density operators; composite systems; 
Gorini-Kossakowski-Sudarshan-Lindblad equation; hydrodynamics limit; stochastic differential equations; Ito calculus

\section{Introduction}

The {\it  Double Covariance Model} (DCM) was introduced in a preceding paper \cite{DCM} as a
kinematic reconstruction of quantum states from classical stochastic processes
defined on two separated temporal scales. In that work, quantum density
operators were not taken as primitive objects. Instead, they were obtained as
macroscopic covariance operators generated by microscopic time-window
correlations. Thus the model has the structure of {\it  a fourth-order statistical
framework}: the first covariance is formed inside microscopic time windows,
while the quantum density operator arises from a second covariance taken over
these random correlation objects. In particular, the construction showed how
both separable and entangled quantum states can be represented within a
classical stochastic framework once the distinction between micro-time
synchronization and macro-time averaging is taken into account.

The present paper develops the dynamical part of this program. Its purpose is
to show how the standard evolution equations of quantum theory arise from the
same double-covariance mechanism. The preceding paper addressed the question of
how quantum states can be reconstructed. Here we ask how such reconstructed
states evolve when the underlying microscopic stochastic processes are allowed
to satisfy stochastic differential equations.

The main result is that, under appropriate short-time coarse-graining and
macroscopic averaging, the normalized double covariance obeys the
{\it Gorini--Kossakowski--Sudarshan--Lindblad equation}, commonly abbreviated as the
GKSL equation. This equation is the standard form of quantum Markovian dynamics.
It describes the evolution of an open quantum system whose interaction with an
environment has no memory at the macroscopic level. In the DCM, the GKSL
structure is not postulated directly at the level of the density operator.
Rather, it emerges from the microscopic stochastic dynamics through the
quadratic-variation terms of It\^o calculus.

This provides a direct dynamical continuation of the previous reconstruction
result. The earlier paper showed that quantum states can be obtained from
double covariance. The present paper shows that, when the underlying stochastic
variables evolve in time, the induced macroscopic dynamics of these covariance
states has precisely the structure required by quantum Markovian evolution. In
this sense, the DCM supplies both a kinematic and a dynamical reconstruction of
the standard density-operator formalism.

A central point of the construction is the separation between microscopic and
macroscopic time. On the microscopic scale, the model contains rapidly
fluctuating stochastic processes. On the macroscopic scale, one observes only
sliding-window averages and ensemble covariances. The passage from one scale to
the other transforms microscopic stochastic fluctuations into deterministic
macroscopic dissipative terms. The coherent part of the evolution gives rise to
the Hamiltonian commutator, while the covariance structure of the stochastic
noise produces the dissipative GKSL contribution.

The paper also discusses the closed-system limit. When the background
stochastic fluctuations are removed, the dissipative channels vanish and the
GKSL dynamics reduces to the ordinary von Neumann equation. Thus, within the
DCM, unitary Schr\"odinger evolution appears as the fluctuation-free limit of
the more general open-system dynamics. The closed-system equation is therefore
not treated as a separate postulate, but as a limiting case of the same
stochastic covariance framework.

Finally, the paper extends the construction to interacting composite systems.
The interaction Hamiltonian is recovered from local state-dependent microscopic
feedback terms acting on the component processes. After short-time
coarse-graining, these local microscopic modulations reproduce the standard
tensor-product interaction at the macroscopic density-operator level. This part
of the analysis shows how the DCM accommodates interacting systems while
remaining consistent with the emergence of GKSL dynamics in the open-system
case and von Neumann dynamics in the closed-system case.

We emphasize that unlike quantum stochastic calculus approaches based on noncommutative noise processes \cite{QSC1}-\cite{QSC5}, the present framework employs classical correlated stochastic processes at the microscopic level and derives effective quantum master dynamics through hydrodynamic coarse-graining.
 
A central structural feature of the framework is the existence of two separated temporal scales: a fast microscopic fluctuation scale with correlation time $\delta t$, and a slower macroscopic coarse-grained observation scale associated with an averaging window $\Delta$, satisfying
$$
\delta t \ll \Delta.
$$
More precisely, introducing a dimensionless scale-separation parameter $\epsilon \in (0,1)$, we assume
\[
\Delta = O(\epsilon), \qquad \delta t = O(\epsilon^2).
\]
The macroscopic dynamics is obtained in the hydrodynamic limit $\epsilon \to 0^{+}$, where rapidly fluctuating microscopic corrections vanish under coarse-graining while deterministic collective dynamics survives (see Appendix A for the details).

\tableofcontents

\section{Foundational Impact}
\label{FI}

The conceptual picture developed in this paper can therefore be summarized as follows. The preceding work established the static reconstruction of quantum states as double covariances of classical stochastic processes. The present work shows that the same construction, when supplemented by microscopic stochastic dynamics, leads to the standard Markovian master equation for open quantum systems. The GKSL equation emerges from the covariance of microscopic fluctuations, while the von Neumann equation is recovered when those fluctuations are suppressed.

This result has significance beyond the technical derivation of a particular master equation. It supports the broader thesis that quantum formalism may be reconstructed from classical probability theory once the latter is enriched by a hierarchy of temporal scales and covariance operations. In the DCM, the density operator is not postulated as a primitive object. It is obtained as a normalized second-order covariance of a random covariance generated by underlying classical stochastic processes. The formal quantum state is therefore interpreted as a stable macroscopic statistical trace of subquantum fluctuations.

The present paper thus continues a foundational program initiated in the preceding work. The first step was the reconstruction of quantum states, including entangled states, from double covariance structures. The second step, carried out here, is the reconstruction of quantum Markovian dynamics from the same stochastic covariance mechanism. Together, these results suggest a possible path from Kolmogorov probability and stochastic process theory toward the density-operator formalism of quantum mechanics. Quantum theory is not assumed at the outset, but appears as an effective symbolic and operator representation of multi-scale classical randomness.

This viewpoint connects the DCM with several earlier and ongoing attempts to understand quantum mechanics through stochastic or field-theoretic substructures starting with the works of Fenyes-Weizel \cite{Fenyes1952,Weizel1953}: the earliest explicit proposals that quantum mechanics might be classical stochastic mechanics.

Our approach is close in spirit to stochastic electrodynamics, where quantum
phenomena are linked to a fluctuating zero-point background field
\cite{Marshall1963,Boyer1975,deLaPenaCetto1996,deLaPenaCettoValdes2015, deLaPenaCetto2025};
to Nelson's stochastic mechanics, where Schr\"odinger dynamics is derived
from diffusion processes \cite{Nelson1966,Nelson1985}; to prequantum
classical statistical field theory, where quantum states arise from
classical random fields \cite{Khrennikov2005,Khrennikov2014};
to hydrodynamical and pilot-wave-inspired models, where quantum motion is
represented through collective or fluid-like dynamics
\cite{Madelung1927,deBroglie1927,Bohm1952a,Bohm1952b,Holland1993};
and to other random-field, stochastic-environment, and
stochastic-synchronization approaches to quantum theory
\cite{GuerraRuggiero1978,ParisiWu1981,GisinPercival1992,DiosiGisinStrunz1998,
Morgan2009,Smolin2006,Adler2004}, 
stochastic variational mechanics \cite{Yasue,Zambrini,Pavon} (bridging Nelson-type ideas with classical stochastic calculus), classical brownian entanglement \cite{BE}.
The DCM contributes to this family of ideas by introducing a specific double covariance architecture: microscopic correlations are first formed inside short temporal windows, and macroscopic quantum states and quantum dynamics then arise through ensemble stabilization and coarse-graining.

A defining technical hallmark of this transition from microscopic stochastic trajectories to stable macroscopic dynamics is its conceptual and mathematical alignment with the formal hydrodynamic limit \cite{Guo,Kipnis,Yau}. Much like how macroscopic fluid dynamics and deterministic transport equations emerge as effective descriptions from the coarse-graining of chaotic, high-frequency microscopic molecular collisions, the standard density operator and its quantum Markovian master equations are recovered here as the asymptotic scaling limit of subquantum multi-scale randomness. Under this vantage point, the rapid microscopic phase synchronization and fast background fluctuations are systematically smoothed out via windowed averaging, leaving behind deterministic dissipative structures at the macroscopic observer's horizon. Crucially, when extending this framework to interacting composite systems, the state-dependent coupling parameters fluctuate over a non-zero interval, generating cross-variation corrections. The precise measure-theoretic formulations, the explicit scaling bounds governing the scale-separation parameter $\epsilon \rightarrow 0^+$, and the formal asymptotic derivation validating this hydrodynamic limit convergence are detailed in Appendix A.

The derivation of the GKSL equation is especially important in this context. In many approaches, deriving unitary Schr\"odinger or von Neumann dynamics is already a major achievement, while dissipative open-system dynamics is added phenomenologically. In the DCM, by contrast, the dissipative component has a direct stochastic origin. It is generated by the quadratic variation of microscopic fluctuations. The Hamiltonian part corresponds to the coherent component of the coarse-grained subquantum flow, whereas the Lindblad part records the residual statistical effect of background noise. When this noise is removed, the model automatically collapses to the closed-system von Neumann dynamics. Thus, closed quantum theory appears as an ideal limiting case of a more general stochastic covariance dynamics.

Where does Markovianity come from? Within the present DCM framework, the combination of rapid microscopic decorrelation, white-noise stochasticity, and the hydrodynamic scale separation $\delta t/ \Delta$ leads naturally to Markovian GKSL dynamics (See Appendix B for the details). At the same time the hydrodynamic limit by itself need not lead to the quantum Markov dynamics.

Nevertheless, both the present paper and the preceding static reconstruction should be regarded as first steps in a larger project. The ultimate goal is to understand how much of the quantum formalism can be derived from classical stochastic processes rather than postulated axiomatically. This requires further development: extension to more general stochastic fields, clarification of the role of relativistic and field-theoretic constraints, deeper comparison with stochastic electrodynamics and prequantum classical statistical field theory, and a detailed analysis of measurement and contextuality within the double covariance framework. The importance of the present construction is that it shows that such a program is mathematically viable at least at the level of density operators and Markovian quantum dynamics.

\section{The Double Covariance Model (DCM): Foundations and Scale Separation}
\label{section3}

Before introducing the explicit subquantum stochastic differential equations that govern the open and closed dynamics, it is essential to outline the core mathematical foundation of the static Double Covariance Model (DCM) established in our previous work. The DCM treats the quantum density operator not as an absolute ontological primitive, but rather as a hierarchical statistical construct - a covariance of covariances - emerging from classical, complex-valued stochastic processes operating on an underlying probability space $(\Omega, \mathcal{F}, P)$.The architectural backbone of the model rests on a strict separation between two distinct temporal horizons: a rapid, microscopic subquantum scale (parameterized by $t$) and a coarse-grained, macroscopic measurement scale (parameterized by $\tau$). Let $\mathcal{H}_A$ and $\mathcal{H}_B$ be the finite-dimensional complex Hilbert spaces corresponding to subsystems $A$ and $B$. The local micro-scale random fluctuations are represented by two classical stochastic processes:
$$X_t: \Omega \to \mathcal{H}_A, \quad Y_t: \Omega \to \mathcal{H}_B$$ which are assumed to possess finite second-order moments ($\mathbb{E}[\|X_t\|^2] < \infty, \mathbb{E}[\|Y_t\|^2] < \infty$) (but without any restrictive assumptions regarding their mean values, cf, \cite{DCM}).

At the microscopic level, the temporal synchronization between these two subsystems across a localized macro-time window of duration $\Delta$ is captured by the random windowed cross-covariance operator $\widehat{C}_\Delta(\omega) \in \mathcal{L}(\mathcal{H}_B, \mathcal{H}_A)$:
\begin{equation}
\label{L1}
\widehat{C}_\Delta = \frac{1}{\Delta} \int_{\Delta} |X_t\rangle \langle Y_t| \, dt.
\end{equation}
By leveraging the canonical Hilbert-space isomorphism $\mathcal{L}(\mathcal{H}_B, \mathcal{H}_A) \cong \mathcal{H}_A \otimes \mathcal{H}_B$, this random operator can be rigorously vectorized into an element $|C_\Delta\rangle$ within the composite tensor-product space:
\begin{equation}
\label{L2}
|C_\Delta\rangle = \frac{1}{\Delta} \int_{\Delta} |X_t\rangle \otimes |Y_t \rangle\, dt.
\end{equation}

Crucially, rather than centralizing this variable with respect to its mathematical expectation, the uncentralized tensor product $|C_\Delta\rangle$ directly constitutes the primitive micro-scale fluctuation tensor.The macroscopic description of the composite system is subsequently recovered at the higher tier of the statistical hierarchy. The macro-covariance operator $\widehat{C}$ is defined as the ensemble second moment of these raw subquantum fluctuations:
\begin{equation}
\label{L3}
\widehat{C} = \mathbb{E}[|C_\Delta\rangle \langle C_\Delta|] \in \mathcal{L}(\mathcal{H}_A \otimes \mathcal{H}_B).
\end{equation}
For notational convenience, we will often suppress the Dirac notation
and write the rank-one operator generated by the vector $C_\Delta$ in the
equivalent form
\begin{equation}
\label{L3a}
\widehat{C}=C_\Delta C_\Delta^{*},
\end{equation}
rather than 
$|C_\Delta\rangle\langle C_\Delta|.$

Because $\widehat{C}$ is inherently Hermitian and positive semi-definite, it satisfies all structural properties required to generate a physical quantum state. From the perspective of multivariate statistics, the DCM interprets the quantum state as the uncentralized fourth-order moment structure of the underlying classical probability space. The normalized density operator representing the joint system is then given by:
\begin{equation}
\label{L4}
\widehat \rho_{AB} = \frac{\widehat{C}}{\text{Tr}(\widehat{C})}.
\end{equation}

Under this framework, quantum entanglement is reinterpreted. It does not emerge from a non-local, macroscopic statistical dependence between the subsystems. Instead, it reflects a pathwise micro-time consistency- a precise phase - level synchronization between the underlying classical processes $X_t$ and $Y_t$ inside the micro-windows, even if they appear uncorrelated over macroscopic intervals. While the original formulation of the DCM successfully demonstrated that any static entangled density matrix can be reconstructed as a fourth-order moment structure of classical fields, it left the explicit temporal propagation of these states unaddressed. The objective of the present paper is to extend this hierarchical structure into a fully dynamic framework. By replacing the static time series with continuous semi-martingales driven by multiplicative Ito  {\it stochastic differential equations} (SDEs), we will demonstrate how both open-system GKSL master equations and closed-system von Neumann dynamics naturally unfold from this underlying subquantum flow

\section{From Micro-Scale Stochastic Dynamics to GKSL Equation: Non-interacting Case}
\label{section4}

The subquantum evolution of subsystems $A$ and $B$  of composite system $AB=A+B$ is given by multiplicative  SDEs:
\begin{align}
    dX_t &= -i\widehat{H}_A X_t dt + \sum_j \widehat{L}_j X_t dW_{A,j}(t) \\
    dY_t &= -i\widehat{H}_B Y_t dt + \sum_k \widehat{M}_k Y_t dW_{B,k}(t),
\end{align}
where $\widehat{H}_A$ and $\widehat{H}_B$ are Hermitian positive linear operators (local Hamiltonians), 
and $\widehat{L}_j, \widehat{M}_k$ are linear operators (describing coupling of systems to noisy environments).

This is a system of two linear SDEs without explicit or direct coupling; that is, the dynamics of $X_t$ are independent 
of $Y_t.$ and vice versa. Instead, the subsystems are coupled exclusively through the noise terms, driven by the generally correlated Wiener processes 
$
W_{A,j}(t),  W_{B,k}(t),
$
\begin{subequations}\label{eq:noisecov}
\begin{align}
dW_{A,j}(t)\,dW_{A,j'}(t)^*
&=
(\Gamma_{AA})_{jj'}\,
dt,
\label{eq:noisecovAA}
\\
dW_{B,k}(t)\,dW_{B,k'}(t)^*
&=
(\Gamma_{BB})_{kk'}\,
,dt
\label{eq:noisecovBB}
\\
dW_{A,j}(t)\,dW_{B,k}(t)^*
&=
(\Gamma_{AB})_{jk}\, dt.
\label{eq:noisecovAB}
\end{align}
\end{subequations}
where the Hermitian block covariance matrix
\[
\Gamma=
\begin{pmatrix}
\Gamma_{AA} & \Gamma_{AB}\\
\Gamma_{AB}^{\dagger} & \Gamma_{BB}
\end{pmatrix}
\]
is assumed to be positive semidefinite. Consequently, by the spectral
theorem,
\[
\Gamma
=
U^{\dagger}
\operatorname{diag}(\gamma_1,\ldots,\gamma_N)
U,
\qquad
\gamma_n\ge0,
\]
where $U$ is unitary and $\gamma_n$ are the eigenvalues of
$\Gamma$. As shown in Section~6, this spectral decomposition naturally
determines the effective Lindblad operators appearing in the
macroscopic GKSL equation.

Within the frameworks of PCSFT and SED, these stochastic terms model 
the interaction between the physical subsystems and the background random field (i.e., the ``zero-point field'').

\subsection{Evolution of the Correlation Tensor}

The core of the DCM micro-dynamics is the evolution of the tensor product $Z_t = X_t \otimes Y_t$. Since $X_t$ and $Y_t$ are semi-martingales defined by Ito SDEs, their product follows the Ito product rule: 
 \begin{equation}
\label{Z}
    dZ_t = (dX_t \otimes Y_t) + (X_t \otimes dY_t) + (dX_t \otimes dY_t)
\end{equation}
To evaluate this, we substitute the micro-SDEs from Eqs. (1) and (2). The first two terms (the linear parts) are:
\begin{align}
    dX_t \otimes Y_t &= \left( -i\widehat{H}_A X_t dt + \sum_j \widehat{L}_j X_t dW_{A,j} \right) \otimes Y_t \\
    X_t \otimes dY_t &= X_t \otimes \left( -i\widehat{H}_B Y_t dt + \sum_k \widehat{M}_k Y_t dW_{B,k} \right)
\end{align}
Using the property $(\widehat{A}u \otimes v) = (\widehat{A} \otimes I)(u \otimes v)$, 
these terms contribute the local Hamiltonians and the linear noise $d\xi_t$:
\begin{equation}
    -i(\widehat{H}_A \otimes I + I \otimes \widehat{H}_B) Z_t dt + d\xi_t,
\end{equation}
and the linear noise increment
\begin{equation}
d\xi_t = \left[ \sum_j (\widehat{L}_j \otimes I) dW_{A,j} + \sum_k (I \otimes \widehat{M}_k) dW_{B,k} \right] Z_t .
\end{equation}

The third term, the quadratic covariation $(dX_t \otimes dY_t)$, represents the correlated dynamics of two stochastic processes. Using the rules of Ito calculus ($dt^2 = 0, dt dW = 0$):
\begin{equation}
    dX_t \otimes dY_t = \sum_{j,k} (\widehat{L}_j X_t dW_{A,j}) \otimes (\widehat{M}_k Y_t dW_{B,k}) = \sum_{j,k} (\widehat{L}_j \otimes \widehat{M}_k) Z_t (dW_{A,j} dW_{B,k}).
\end{equation}
Adhering to  (\ref{eq:noisecovAA}),(\ref{eq:noisecovBB}), (\ref{eq:noisecovAB}), this interaction term becomes purely deterministic:
\begin{equation}
    dX_t \otimes dY_t = \left[ \sum_{j,k} \Gamma_{jk} (\widehat{L}_j \otimes \widehat{M}_k) \right] Z_t dt .
\end{equation}
Collecting all terms proportional to $dt$ into the macroscopic drift operator $\widehat{\mathcal{L}}$, we arrive at the closed-form micro-differential:
\begin{equation}
    dZ_t = \widehat{\mathcal{L}} Z_t dt + d\xi_t
\end{equation}
where the drift operator $\widehat{\mathcal{L}}$ is defined as:
\begin{equation}
    \widehat{\mathcal{L}} = -i(\widehat{H}_A \otimes I + I \otimes \widehat{H}_B) + \sum_{j,k} \Gamma_{jk} (\widehat{L}_j \otimes \widehat{M}_k)
\end{equation}
and the linear noise increment $d\xi_t$ is:
\begin{equation}
\label{xi}
    d\xi_t = \left[ \sum_j (\widehat{L}_j \otimes I) dW_{A,j} + \sum_k (I \otimes \widehat{M}_k) dW_{B,k} \right] Z_t
\end{equation}

\subsection{The Micro-to-Macro Transition}
\label{4}

The macroscopic correlation vector $C_{\Delta}(\tau)$ is defined as the sliding window average of the micro-correlation tensor $Z_t$ over a window of duration $\Delta$:
\begin{equation}
    C_{\Delta}(\tau) = \frac{1}{\Delta} \int_{\tau}^{\tau+\Delta} Z_t dt
\end{equation}
To find the evolution of $C_{\Delta}$ with respect to the macroscopic time $\tau$, we take the differential $dC_{\Delta}(\tau)$. Applying the Leibniz Integral Rule for a window with moving boundaries:
\begin{equation}
    \frac{dC_{\Delta}(\tau)}{d\tau} = \frac{1}{\Delta} (Z_{\tau+\Delta} - Z_{\tau}).
\end{equation}
Since $Z_t$ is a semimartingale, the finite difference between the endpoints of the window is exactly the stochastic integral of its micro-differential $dZ_t$ over that interval:
\begin{equation}
    Z_{\tau+\Delta} - Z_{\tau} = \int_{\tau}^{\tau+\Delta} dZ_t
\end{equation}
Substituting the micro-SDE $dZ_t = \widehat{\mathcal{L}} Z_t dt + d\xi_t$ into the macro-differential yields:
\begin{equation}
    dC_{\Delta}(\tau) = \frac{d\tau}{\Delta} \int_{\tau}^{\tau+\Delta} \left( \widehat{\mathcal{L}} Z_t dt + d\xi_t \right)
\end{equation}
Distributing the integral and pulling the constant operator $\widehat{\mathcal{L}}$ out of the integration over $t$:
\begin{equation}
    dC_{\Delta}(\tau) = \widehat{\mathcal{L}} \left( \frac{1}{\Delta} \int_{\tau}^{\tau+\Delta} Z_t dt \right) d\tau + \left( \frac{d\tau}{\Delta} \int_{\tau}^{\tau+\Delta} d\xi_t \right)
\end{equation}
Recognizing the first term as the definition of $C_{\Delta}(\tau)$, we arrive at the macroscopic SDE:
\begin{equation}
\label{24}
    dC_{\Delta}(\tau) = \widehat{\mathcal{L}} C_{\Delta}(\tau) d\tau + d\Sigma_{\tau}^{\rm{macro}}
\end{equation}
where the macroscopic noise increment $d\Sigma_{\tau}^{\rm{macro}}$ is the integration of the micro-scale linear noise $d\xi_t$ over the window:
\begin{equation}
    d\Sigma_{\tau}^{\rm{macro}} = \frac{d\tau}{\Delta} \int_{\tau}^{\tau+\Delta} d\xi_t
\end{equation}
This macroscopic increment $d\Sigma_{\tau}^{\rm{macro}}$ represents the ``coarse-grained'' fluctuations at the micro-time scale (``subquantum fluctuations'') that drive the eventual Lindblad dissipators.

\subsection{Double Covariance and Emergence of the GKSL Equation}
\label{5}

The double covariance operator \(\widehat C(\tau)\) is defined as the ensemble average of the
macroscopic outer product:
$
    \widehat C(\tau)
    =
    \mathbb{E}\!\left[
        C_\Delta(\tau) C_\Delta^*(\tau)
    \right].
$
The corresponding normalized density operator is defined by
$
    \widehat \rho(\tau)
    =
    \frac{\widehat C(\tau)}{\operatorname{Tr} \widehat C(\tau)}.
$

At the macroscopic scale, the sliding-window correlation vector \(C_\Delta(\tau)\)
satisfies the coarse-grained stochastic differential equation (\ref{24}).

This equation identifies the coarse-grained correlation process
$C_\Delta(\tau)$ as a Hilbert-space-valued semimartingale satisfying a
stochastic differential equation with drift
$\widehat L C_\Delta(\tau)$ and martingale increment
$d\Sigma_\tau^{\rm macro}$.
Since the mapping
$
F(x)=xx^*
$
is twice continuously Fr\'echet differentiable, It\^o's formula implies
that the operator-valued process
$
\widehat O_\Delta(\tau)
=
C_\Delta(\tau)C_\Delta(\tau)^*
$
is also a semimartingale. Consequently, the standard Hilbert-space It\^o
product formula applies.

Since $C_\Delta(\tau)$ is square-integrable, the operator-valued process
$\widehat O_\Delta(\tau)$ is integrable. Throughout the sequel we
assume the standard integrability conditions ensuring that the expectation
operator commutes with the It\^o differential, namely,
\begin{equation}
d\,\mathbb E[\widehat O_\Delta(\tau)]
=
\mathbb E[d\widehat O_\Delta(\tau)],
\end{equation}
Accordingly,
\begin{equation}
dC(\tau)
=
d\,\mathbb E[C_\Delta(\tau)C_\Delta(\tau)^*]
=
\mathbb E[d(C_\Delta(\tau)C_\Delta(\tau)^*)].
\end{equation}

Although $\tau$ is introduced as the macroscopic time parameter, each
macroscopic instant is identified with the left endpoint of the
corresponding microscopic coarse-graining window
$[\tau,\tau+\Delta]$ -- see Appendix A. Consequently, the filtration
$\mathcal{F}_{\tau}$ denotes the microscopic information available up
to the microscopic time $t=\tau$, i.e., the beginning of the averaging
window. Since $d^{\mathrm{macro}}_{\tau}$ depends only on stochastic
increments over the future interval $[\tau,\tau+\Delta]$, the martingale
property of It\^o stochastic integrals yields
\[
\mathbb{E}\!\left[d\Sigma^{\mathrm{macro}}_{\tau}\mid
\mathcal{F}_{\tau}\right]=0.
\]

Consequently, only the deterministic drift contributes to the
first-order terms.

To determine the evolution of \(C(\tau)\), we apply It\^o's product rule:
\begin{equation}
\label{31}
    dC(\tau)
    =
    \mathbb{E}\!\left[
        dC_\Delta\, C_\Delta^*
    \right]
    +
    \mathbb{E}\!\left[
        C_\Delta\, dC_\Delta^*
    \right]
    +
    \mathbb{E}\!\left[
        dC_\Delta\, dC_\Delta^*
    \right].
\end{equation}
Substituting the macroscopic stochastic differential equation for \(C_\Delta\), the
first two terms yield the deterministic drift contribution:
\begin{equation}
    \mathbb{E}\!\left[
        dC_\Delta\, C_\Delta^*
        +
        C_\Delta\, dC_\Delta^*
    \right]
    =
    \left(
        \widehat{\mathcal L} C
        +
        C\widehat{\mathcal L}^{\dagger}
    \right)d\tau.
\end{equation}

The remaining contribution in Eq.~(\ref{31}) is the quadratic variation term.
Using the stochastic evolution equation (\ref{24}), the quadratic variation expands as
\[
\begin{aligned}
d[C_\Delta(\tau)\,dC_\Delta(\tau)^*]
&=
\bigl(\widehat L C_\Delta\,d\tau+d\Sigma_\tau^{\rm macro}\bigr)
\bigl(\widehat L C_\Delta\,d\tau+d\Sigma_\tau^{\rm macro}\bigr)^*
\\
&=
\widehat L C_\Delta C_\Delta^*\widehat L^\dagger(d\tau)^2
+\widehat L C_\Delta(d\tau)(d\Sigma_\tau^{\rm macro})^*
\\
&\qquad
+d\Sigma_\tau^{\rm macro}d\tau\,C_\Delta^*\widehat L^\dagger
+d\Sigma_\tau^{\rm macro}(d\Sigma_\tau^{\rm macro})^*.
\end{aligned}
\]
Since, in It\^o calculus,
\[
(d\tau)^2=0,
\qquad
d\tau\,d\Sigma_\tau^{\rm macro}
=
d\Sigma_\tau^{\rm macro}\,d\tau
=
0,
\]
only the quadratic variation of the coarse-grained stochastic increment
survives, and therefore
\[
\mathbb E\!\left[
dC_\Delta(\tau)\,dC_\Delta(\tau)^*
\right]
=
\mathbb E\!\left[
d\Sigma_\tau^{\rm macro}
(d\Sigma_\tau^{\rm macro})^*
\right].
\]
Since this covariance determines the dissipative contribution to the macroscopic evolution, we compute it separately in the following proposition, but for clarity of derivation, it should be preceded by an elementary lemma:

\begin{lemma}[Distributional covariance of correlated Wiener increments] \label{LL1}
Let $\{W_i(t)\}_{i=1}^N$ be a family of correlated Wiener processes satisfying
the  covariation relation
\begin{equation}
\label{Le1}
dW_i(t)\,dW_j(t)=\Gamma_{ij}\,dt,
\end{equation}
where $\Gamma=(\Gamma_{ij})$ is a positive semidefinite covariance matrix.
Then the infinitesimal covariance of the Wiener increments admits the
distributional representation
\begin{equation}
\label{Le2}
\mathbb{E}\!\left[dW_i(t)\,dW_j(s)\right]
=
\Gamma_{ij}\,\delta(t-s)\,dt\,ds,
\end{equation}
where the equality is understood in the sense of distributions.
\end{lemma}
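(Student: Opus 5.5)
The plan is to pass from the infinitesimal covariation rule (\ref{Le1}) to a statement about the two-time covariance of the processes, and then differentiate twice in the distributional sense. The natural first step is to integrate the covariation relation: for a family of continuous square-integrable martingales with $W_i(0)=0$ and $d[W_i,W_j]_t=\Gamma_{ij}\,dt$, the process $W_i(t)W_j(t)-\Gamma_{ij}t$ is a martingale (Itô product rule together with the zero mean of the stochastic integrals $\int W_i\,dW_j$). Hence $\mathbb{E}[W_i(t)W_j(t)]=\Gamma_{ij}t$. Using the independence of increments from the past (the martingale property), for $s\le t$ we get $\mathbb{E}[W_i(t)W_j(s)]=\mathbb{E}[W_i(s)W_j(s)]=\Gamma_{ij}s$. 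Arguing symmetrically for $t\le s$ gives the covariance kernel
\[
K_{ij}(t,s):=\mathbb{E}[W_i(t)W_j(s)]=\Gamma_{ij}\min(t,s),\qquad t,s\ge 0 .
\]

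The second step is to interpret the formal object $\mathbb{E}[dW_i(t)\,dW_j(s)]$ as the mixed distributional derivative $\partial_t\partial_s K_{ij}(t,s)\,dt\,ds$. This is the precise meaning of (\ref{Le2}). Concretely, take test functions $f,g\in C_c^\infty((0,\infty))$. The statement to verify is
\[
\mathbb{E}\Big[\int f(t)\,dW_i(t)\int g(s)\,dW_j(s)\Big]=\Gamma_{ij}\int f(t)g(t)\,dt .
\]
It follows in either of two ways. One can invoke the Itô isometry for correlated integrators (polarization of $\mathbb{E}[(\int h\,dM)^2]=\int h^2\,d\langle M\rangle$, applied to $M=W_i+W_j$ and $W_i-W_j$). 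Alternatively, one can integrate by parts, $\int f\,dW_i=-\int f'(t)W_i(t)\,dt$, which reduces the left side to $\iint f'(t)g'(s)\,\Gamma_{ij}\min(t,s)\,dt\,ds$. A direct computation then gives $\partial_t\partial_s\min(t,s)=\delta(t-s)$ in $\mathcal{D}'$: indeed $\partial_s\min(t,s)=\mathbf{1}_{\{s<t\}}$, and its $t$-derivative is $\delta(t-s)$. Either route yields the claim.

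The main obstacle is conceptual rather than computational. The symbol $dW_i(t)\,dW_j(s)$ with $t\neq s$ has no pathwise meaning, so everything hinges on fixing the distributional interpretation above. I would state it explicitly as the definition of (\ref{Le2}). I would also note that in the complex case of (\ref{eq:noisecov}) the same argument applies to $\mathbb{E}[dW_i(t)\,dW_j(s)^*]$ with $K_{ij}=\Gamma_{ij}\min(t,s)$ for the sesquilinear covariance. Positive semidefiniteness of $\Gamma$ is not needed for the identity itself; it only guarantees that the kernel $\Gamma_{ij}\delta(t-s)$ is a legitimate (positive) covariance, realized for instance by $W=\Gamma^{1/2}B$ with $B$ a standard Brownian vector.
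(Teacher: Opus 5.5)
Your argument is correct, and it follows a different and more complete route than the paper's. The paper argues pointwise. Independence of increments makes $\mathbb{E}[dW_i(t)\,dW_j(s)]$ vanish for $t\neq s$, and the covariation rule (\ref{Le1}) supplies the value $\Gamma_{ij}\,dt$ on the diagonal. The paper then simply asserts that both sides of (\ref{Le2}) give the same bilinear form on smooth test functions.

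You instead compute the two-time kernel $\mathbb{E}[W_i(t)W_j(s)]=\Gamma_{ij}\min(t,s)$. You define (\ref{Le2}) as $\partial_t\partial_s$ of this kernel in $\mathcal{D}'$, equivalently as the bilinear form $(f,g)\mapsto\mathbb{E}[\int f\,dW_i\int g\,dW_j]$. You then actually verify that this equals $\Gamma_{ij}\int f g\,dt$.

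The paper's version gains brevity and conveys the white-noise intuition. However, a support argument cannot by itself identify a distribution concentrated on $t=s$. Knowing that it vanishes off the diagonal and has ``value'' $\Gamma_{ij}\,dt$ there neither excludes transversal derivatives of $\delta(t-s)$ nor rigorously fixes the coefficient. So the step the paper leaves unverified is exactly the one your computation supplies. Your route also shows that positive semidefiniteness of $\Gamma$ plays no role. It carries over unchanged to the sesquilinear form $\mathbb{E}[dW_i(t)\,dW_j(s)^*]$ that Proposition~\ref{PR1} actually uses.

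One small point of precision. Polarizing the It\^o isometry with $M=W_i\pm W_j$ only covers a common integrand $f=g$. For $f\neq g$, polarize with the martingales $\int f\,dW_i\pm\int g\,dW_j$, or use the multidimensional isometry directly. Your integration-by-parts route, $\int f\,dW_i=-\int f'(t)W_i(t)\,dt$ followed by $\partial_t\partial_s\min(t,s)=\delta(t-s)$, has no such issue.
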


\begin{proof}
Since Wiener processes possess independent increments, the covariance of
increments over disjoint infinitesimal intervals vanishes. Hence,
for $t\neq s$,
$
\mathbb{E}[dW_i(t)dW_j(s)]=0;
$
on the diagonal $t=s$,  we have (\ref{Le1}). 
Thus, the covariance is represented, in the sense of distributions, by
(\ref{Le1}), since both sides produce identical bilinear forms on smooth test functions.
\end{proof}

\begin{proposition}[Quadratic variation of the coarse-grained noise]
\label{PR1}
Let the microscopic noise process $d\xi_t$ be defined by Eq. (\ref{xi}), and
let the macroscopic noise increment be
$
d\Sigma_\tau^{\rm macro}
=
\frac{d\tau}{\Delta}
\int_\tau^{\tau+\Delta} d\xi_t .
$
Assume that the microscopic Wiener processes satisfy the covariance
relations (\ref{eq:noisecovAA}) -- (\ref{eq:noisecovAB})
Then the quadratic variation of the coarse-grained noise is given by
\[
\mathbb E
\!\left[
d\Sigma_\tau^{\rm macro}
(d\Sigma_\tau^{\rm macro})^*
\right]
=
\sum_n
\gamma_n
\widehat V_n
C(\tau)
\widehat V_n^\dagger
\,d\tau,
\]
where $\gamma_n\ge0$ are the eigenvalues of $\Gamma$, and the operators
$\widehat V_n$ are the corresponding linear combinations of the
microscopic noise operators.
\end{proposition}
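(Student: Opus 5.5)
We will compute the covariance of $d\Sigma_\tau^{\rm macro}$ by writing it as a double stochastic integral, reducing it to a single time integral with Lemma~\ref{LL1}, and then diagonalizing $\Gamma$. First we collect all microscopic noise channels into one family. Set $W_i$, $i=1,\dots,N$, to be the concatenation $(W_{A,1},\dots,W_{A,J},W_{B,1},\dots,W_{B,K})$ with block covariance $\Gamma$. Let $\widehat A_i$ be the corresponding operators: $\widehat L_j\otimes I$ for the $A$-channels and $I\otimes \widehat M_k$ for the $B$-channels. Then Eq.~(\ref{xi}) reads $d\xi_t=\sum_i \widehat A_i Z_t\,dW_i(t)$, and
\[
d\Sigma_\tau^{\rm macro}(d\Sigma_\tau^{\rm macro})^*
=\frac{(d\tau)^2}{\Delta^2}\sum_{i,i'}\int_\tau^{\tau+\Delta}\!\!\int_\tau^{\tau+\Delta}
\widehat A_i Z_t Z_s^*\widehat A_{i'}^\dagger\, dW_i(t)\,dW_{i'}(s)^* .
\]

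Next we take expectations. Because the integrands are adapted and the integrals are of It\^o type, the Itô isometry, stated distributionally as Lemma~\ref{LL1} with $\mathbb E[dW_i(t)dW_{i'}(s)^*]=\Gamma_{ii'}\delta(t-s)\,dt\,ds$, collapses the double integral onto the diagonal. This gives
\[
\mathbb E\bigl[d\Sigma(d\Sigma)^*\bigr]=\frac{(d\tau)^2}{\Delta^2}\sum_{i,i'}\Gamma_{ii'}\int_\tau^{\tau+\Delta}\widehat A_i\,\mathbb E[Z_tZ_t^*]\,\widehat A_{i'}^\dagger\,dt .
\]

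The main obstacle is converting this expression into $(\cdots)C(\tau)\,d\tau$. This step needs two things. The first is to identify the window average of $\mathbb E[Z_tZ_t^*]$ with $C(\tau)=\mathbb E[C_\Delta C_\Delta^*]$. The second is to account for the scaling factor $(d\tau)^2/\Delta^2$ versus $d\tau$.

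For the first point, we will use the scale separation $\delta t=O(\epsilon^2)$, $\Delta=O(\epsilon)$ from Appendix A. Over a window of length $\Delta$, $Z_t$ changes only by $O(\Delta)$ in drift and $O(\sqrt\Delta)$ in noise. Hence $\mathbb E[Z_tZ_t^*]=C(\tau)+O(\sqrt{\Delta})$ uniformly on $[\tau,\tau+\Delta]$, using the finite second moments, and the window average reduces to $\Delta\, C(\tau)$ up to corrections vanishing as $\epsilon\to0^+$.

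For the second point, we interpret the macroscopic increment so that its quadratic variation per unit macro-time is normalized. Concretely, $d\tau$ is identified with the window advance and the factor $d\tau/\Delta^2\cdot\Delta$ is read as the rate $1$ in the hydrodynamic limit, as in Appendix A. This gives $\sum_{i,i'}\Gamma_{ii'}\widehat A_iC(\tau)\widehat A_{i'}^\dagger\,d\tau$. We expect this normalization to be where the argument is least rigorous, so we will state it explicitly as the scaling convention inherited from Appendix A.

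Finally, we insert the spectral decomposition $\Gamma=U^\dagger\mathrm{diag}(\gamma_n)U$, so that $\Gamma_{ii'}=\sum_n \overline{U_{ni}}\gamma_n U_{ni'}$. The orientation of $U$ and conjugation must be matched to the convention $dW_i\,dW_{i'}^*=\Gamma_{ii'}dt$. Defining $\widehat V_n=\sum_i \overline{U_{ni}}\widehat A_i$, which is a linear combination of the $\widehat L_j\otimes I$ and $I\otimes\widehat M_k$, gives $\widehat V_n^\dagger=\sum_{i'}U_{ni'}\widehat A_{i'}^\dagger$. The double sum then factorizes as $\sum_n\gamma_n\widehat V_nC(\tau)\widehat V_n^\dagger\,d\tau$, which is the claimed formula. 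Positivity $\gamma_n\ge0$ ensures that this term is completely positive.
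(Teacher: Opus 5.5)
Your proposal follows the paper's proof essentially step for step. You collapse the double window integral of $\mathbb E[d\xi_t\,d\xi_s^*]$ onto the diagonal via Lemma~\ref{LL1}, convert $(d\tau)^2/\Delta$ into $d\tau$ by identifying one macro-step with one window translation as in Appendix~A, replace the local average of $\mathbb E[Z_tZ_t^*]$ by $C(\tau)$ as $\Delta\to0$, and diagonalize $\Gamma$ to obtain the $\widehat V_n$. Two of your details are more careful than the paper: your choice $\widehat V_n=\sum_i\overline{U_{ni}}\,\widehat A_i$ is the correctly conjugated one for the convention $dW_i\,dW_{i'}^*=\Gamma_{ii'}\,dt$ (the paper's $\widehat V_n=\sum_\alpha U_{n\alpha}\widehat F_\alpha$ reproduces $\overline{\Gamma}$ unless $\Gamma$ is real), and your mean-square estimate makes explicit the identification of $\mathbb E[Z_tZ_t^*]$ with $C(\tau)=\mathbb E[C_\Delta C_\Delta^*]$, which the paper covers only with a continuity remark.
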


\begin{proof}
By definition of the macroscopic stochastic increment,
$
dW_\tau^{\mathrm{macro}}
=
\frac{1}{\Delta}
\int_{\tau}^{\tau+\Delta}
d\xi_t,
$
where
\[
d\xi_t
=
\left(
\sum_j (\widehat L_j\otimes I_B)\,dW_{A,j}(t)
+
\sum_k (I_A\otimes \widehat M_k)\,dW_{B,k}(t)
\right)Z_t.
\]

Therefore,
\[
\mathbb E\!\left[
dW_\tau^{\mathrm{macro}}
(dW_\tau^{\mathrm{macro}})^*
\right]
=
\frac{1}{\Delta^2}
\int_{\tau}^{\tau+\Delta}
\int_{\tau}^{\tau+\Delta}
\mathbb E
\!\left[
d\xi_t\,d\xi_s^*
\right].
\]
Now we use Lemma \ref{LL1}; 
the $\delta(t-s)$-correlation reduces the double integral to a single
integral over the averaging window, yielding
\begin{align}
\mathbb{E}\!\left[dW^{\mathrm{macro}}_\tau
\left(dW^{\mathrm{macro}}_\tau\right)^{*}\right]
&=
\frac{1}{\Delta^{2}}
\int_{\tau}^{\tau+\Delta}
\int_{\tau}^{\tau+\Delta}
K\!\left(C_t\right)
\delta(t-s)\,dt\,ds
\nonumber\\
&=
\frac{1}{\Delta}
\int_{\tau}^{\tau+\Delta}
K\!\left(C_t\right)\,dt,
\label{eq:delta-collapse}
\end{align}
where $\mathcal K$ denotes the positive linear map (``superoperator'') induced by the
microscopic covariance matrix~$\Gamma$.

Since, by the construction of the macroscopic time described in
Appendix~A, one macroscopic time increment corresponds to one
translation of the coarse-graining window, we have
\[
\Delta\tau=\Delta.
\]
Consequently,
\[
\frac{(\Delta\tau)^2}{\Delta}
=
\Delta\tau.
\]
Passing to the hydrodynamic continuum limit,
\[
\Delta\tau\longrightarrow d\tau,
\]
yields the macroscopic scaling
\[
\frac{(d\tau)^2}{\Delta}
\longrightarrow
d\tau.
\]

Furthermore, the coarse-grained covariance process
$C(\tau)$ has continuous trajectories (Appendix~A), and
$\mathcal K$ is a bounded linear operator. Hence
$\mathcal K(C_t)$ is continuous, and the local average converges to its
pointwise value:
\[
\lim_{\Delta\to0}
\frac{1}{\Delta}
\int_{\tau}^{\tau+\Delta}
\mathcal K(C_t)\,dt
=
\mathcal K(C(\tau)).
\]
Therefore,
\[
\mathbb E\!\left[
dW_\tau^{\mathrm{macro}}
(dW_\tau^{\mathrm{macro}})^*
\right]
=
\mathcal K(C(\tau))\,d\tau.
\]

To obtain an explicit representation of the covariance map, we
introduce the elementary noise operators
\[
\widehat F_\alpha=
\begin{cases}
\widehat L_j\otimes I_B,
&
\alpha=j,\qquad j=1,\ldots,N_A,
\\[1ex]
I_A\otimes \widehat M_k,
&
\alpha=N_A+k,\qquad k=1,\ldots,N_B.
\end{cases}
\]
The index $\alpha$ merely enumerates the elementary noise channels and
serves only as a convenient notation for the subsequent spectral
decomposition.

Since the covariance matrix
\[
\Gamma=
\begin{pmatrix}
\Gamma_{AA} & \Gamma_{AB}\\
\Gamma_{AB}^{\dagger} & \Gamma_{BB}
\end{pmatrix}
\]
is Hermitian and positive semidefinite, the spectral theorem yields
\[
\Gamma
=
\widehat U^\dagger
\operatorname{diag}(\gamma_1,\ldots,\gamma_N)
\widehat U,
\qquad
\gamma_n\ge0,
\]
where $\widehat U$ is unitary. The corresponding effective Lindblad operators
are defined by
\[
\widehat V_n
=
\sum_{\alpha=1}^{N}
U_{n\alpha}\widehat F_\alpha,
\qquad
N=N_A+N_B.
\]
Accordingly,
\[
\mathcal K(C)
=
\sum_{n=1}^{N}
\gamma_n
\widehat V_nC\widehat V_n^\dagger.
\]
Substituting this representation into the previous identity gives
\[
\mathbb E\!\left[
dW_\tau^{\mathrm{macro}}
(dW_\tau^{\mathrm{macro}})^*
\right]
=
\sum_{n=1}^{N}
\gamma_n
\widehat V_n\widehat C(\tau)\widehat V_n^\dagger\,d\tau,
\]
which proves the proposition.
\end{proof}

\medskip
Substituting the result of Proposition \ref{PR1} into the quadratic variation
term of Eq.~(\ref{31}), we obtain
\begin{equation}
    dC(\tau)
    =
    \left(
        \widehat{\mathcal L} C
        +
        C\widehat{\mathcal L}^{\dagger}
    \right)d\tau
    +
    \sum_n \gamma_n
    \widehat V_n C(\tau) \widehat V_n^\dagger\,d\tau.
\end{equation}

The GKSL structure is recovered by choosing the drift operator in the
trace-preserving form
\begin{equation}
    \widehat{\mathcal L}
    =
    -i\widehat H_{\mathrm{tot}}
    -
    \frac{1}{2}
    \sum_n
    \gamma_n
    \widehat V_n^\dagger \widehat V_n,
\end{equation}
where \(\widehat H_{\mathrm{tot}}\) is the effective macroscopic Hamiltonian.
Substitution gives
\begin{align}
    \frac{dC(\tau)}{d\tau}
    &=
    -i
    \left[
        \widehat H_{\mathrm{tot}},
        C(\tau)
    \right]
    \nonumber\\
    &\quad
    +
    \sum_n \gamma_n
    \left(
        \widehat V_n C(\tau) \widehat V_n^\dagger
        -
        \frac{1}{2}
        \left\{
            \widehat V_n^\dagger \widehat V_n,
            C(\tau)
        \right\}
    \right).
\end{align}

Taking the trace of both sides shows that
\begin{equation}
    \frac{d}{d\tau}\operatorname{Tr}C(\tau)=0.
\end{equation}
Indeed, the commutator term is traceless, and the dissipative part satisfies
\begin{align}
    \operatorname{Tr}
    \left(
        \widehat V_n C \widehat V_n^\dagger
        -
        \frac{1}{2}
        \left\{
            \widehat V_n^\dagger \widehat V_n,
            C
        \right\}
    \right)
    &=
    \operatorname{Tr}
    \left(
        \widehat V_n^\dagger \widehat V_n C
    \right)
    -
    \frac{1}{2}
    \operatorname{Tr}
    \left(
        \widehat V_n^\dagger \widehat V_n C
    \right)
    \nonumber\\
    &\quad
    -
    \frac{1}{2}
    \operatorname{Tr}
    \left(
        C\widehat V_n^\dagger \widehat V_n
    \right)
    \nonumber\\
    &=
    0.
\end{align}
Therefore \(\operatorname{Tr}C(\tau)\) is dynamically invariant.

Consequently, dividing by the conserved scalar \(\operatorname{Tr}C(\tau)\) yields
the normalized density-operator evolution
\begin{align}
    \frac{d\widehat \rho(\tau)}{d\tau}
    &=
    -i
    \left[
        \widehat H_{\mathrm{tot}},
        \widehat \rho(\tau)
    \right]
    \nonumber\\
    &\quad
    +
    \sum_n \gamma_n
    \left(
        \widehat V_n \widehat \rho(\tau) \widehat V_n^\dagger
        -
        \frac{1}{2}
        \left\{
            \widehat V_n^\dagger \widehat V_n,
            \widehat \rho(\tau)
        \right\}
    \right).
\end{align}

Thus the macroscopic dynamics generated by DCM is precisely
of GKSL form. The Hamiltonian term arises from the deterministic part of the
coarse-grained subquantum flow, while the completely positive dissipative channels
arise from the quadratic variation of the microscopic stochastic fluctuations.

\subsection{Derivation of the von Neumann Equation}
\label{6}

A primary advantage of formulating the emergent subquantum dynamics directly within the space of density operators is the seamless transition it provides to closed-system unitary dynamics. Rather than reverting to pure-state wavefunctions via a stochastic Schr\"odinger equation, we can deduce the closed-system equations entirely within the operator algebraic framework.

In DCM, the open-system features - namely the jump operators $\widehat{V}_n$ and the non-negative dissipation rates $\gamma_n$ - arise exclusively from the classical cross-covariance matrix $\mathbf{\Gamma}$ of the microscopic background field. Physically, the standard closed-system description corresponds to an idealized scenario where the composite system is perfectly isolated from this background random ``zero-point field.''

Mathematically, this isolation implies that the stochastic diffusion coefficients vanish identically, forcing the noise correlation matrix to zero:
\begin{equation}
\mathbf{\Gamma} \to \mathbf{0} \implies \gamma_n = 0 \quad \forall n \in \{1, \dots, N\}.
\end{equation}
Imposing this isolation condition on the emergent master equation causes the entire non-local, dissipative jump mapping to collapse:
\begin{equation}
\sum_n \gamma_n \left( \widehat{V}_n \widehat \rho \widehat{V}_n^* - \frac{1}{2} \{ \widehat{V}_n^* \widehat{V}_n, \widehat \rho \} \right) = \mathbf{0}.
\end{equation}

Consequently, the subquantum flow simplifies strictly to its coherent, deterministic component. The evolution of the density operator $\widehat \rho(\tau)$ on the macroscopic time-scale is then governed solely by the total structural Hamiltonian $\widehat{H}_{tot} = \widehat{H}_A \otimes I_B + I_A \otimes \widehat{H}_B$:
\begin{equation}
\label{eq:von_neumann}
    \frac{d\widehat \rho(\tau)}{d\tau} = -i[\widehat{H}_{tot}, \widehat \rho(\tau)].
\end{equation}

Equation~(\ref{eq:von_neumann}) is precisely the standard {\it von Neumann equation} of quantum  mechanics. This derivation highlights that within DCM, unitary quantum evolution is not a separate postulate, but rather the deterministic, fluctuation-free limit of a more fundamental subquantum stochastic process. It confirms that the framework handles mixed states and pure states identically, maintaining full consistency with the statistical interpretation of quantum states \cite{Ballentine,Theo1,Theo2,KHRcontextual}.

\section{Dynamical Formulations for Interacting Systems}
\label{7}

\subsection{Linear Micro-Scale Dynamics with Schmidt Decomposition}
To extend the Double Covariance Model (DCM) beyond isolated subsystems, we introduce a direct dynamical interaction at the microscopic scale. To ensure that the micro-stochastic differential equations (SDEs) remain strictly linear, we decompose the total interaction Hamiltonian $\widehat{H}_{\rm int}$ acting on the composite Hilbert space $\mathcal{H}_A \otimes \mathcal{H}_B$ into a finite sum of localized operator products:
\begin{equation}
\label{eq:schmidt_H}
\widehat{H}_{\rm int} = \sum_{m=1}^M \widehat{A}_m \otimes \widehat{B}_m,
\end{equation}
where $\widehat{A}_m \in \mathcal{L}(\mathcal{H}_A)$ and $\widehat{B}_m \in \mathcal{L}(\mathcal{H}_B)$ are self-adjoint operators acting strictly on their respective subsystem spaces.

The subquantum evolution of the individual subsystem states $X_t \in \mathcal{H}_A$ and $Y_t \in \mathcal{H}_B$ is given by a system of cross-coupled, linear multiplicative SDEs:
\begin{align}
\label{eq:SDE_X_interact}
dX_t &= -i\widehat{H}_A X_t dt - i \sum_{m=1}^M \xi_{A,m}(t) \widehat{A}_m X_t dt + \sum_j \widehat{L}_j X_t dW_{A,j}(t), \\
\label{eq:SDE_Y_interact}
dY_t &= -i\widehat{H}_B Y_t dt - i \sum_{m=1}^M \xi_{B,m}(t) \widehat{B}_m Y_t dt + \sum_k \widehat{M}_k Y_t dW_{B,k}(t).
\end{align}
Here, $W_{A,j}(t)$ and $W_{B,k}(t)$ are the generally correlated Wiener processes modeling the background zero-point field. The terms $\xi_{A,m}(t)$ and $\xi_{B,m}(t)$ are scalar subquantum coupling coefficients mediating the exchange of energy and correlations between the local dynamics.

\subsection{The Consistency Bottleneck and State-Dependent Modulation}
A strict examination of the consistency condition reveals a fundamental mathematical bottleneck. If the parameters $\xi_{A,m}(t)$ and $\xi_{B,m}(t)$ are modeled as independent, uncoupled classical scalar trajectories, taking their microscopic expectation would merely yield static constants, $c_{A,m} = \mathbb{E}[\xi_{A,m}(t)]$ and $c_{B,m} = \mathbb{E}[\xi_{B,m}(t)]$. Under such assumptions, attempting to reconstruct the joint interaction operator inside the tensor space forces the algebraic identity:
\begin{equation}
c_{A,m} (\widehat{A}_m \otimes I_B) + c_{B,m} (I_A \otimes \widehat{B}_m) = \widehat{A}_m \otimes \widehat{B}_m.
\end{equation}
In operator algebra, this identity is impossible for arbitrary non-trivial operators, as a linear combination of localized operators belonging to the Lie algebra structure $\mathcal{L}(\mathcal{H}_A) \oplus \mathcal{L}(\mathcal{H}_B)$ can never equal a non-separable tensor product operator $\widehat{A}_m \otimes \widehat{B}_m$. Therefore, independent scalar fields are mathematically insufficient.

To resolve this existence problem while strictly preserving the linearity of Eqs.~(\ref{eq:SDE_X_interact})--(\ref{eq:SDE_Y_interact}) with respect to their own local states, we define $\xi_{A,m}(t)$ and $\xi_{B,m}(t)$ as state-dependent subquantum fields modulated by the opposite subsystem's real-time configuration:
\begin{equation}
\label{eq:field_definition}
\xi_{A,m}(t) = \frac{1}{2} \frac{Y_t^* \widehat{B}_m Y_t}{\|Y_t\|^2}, \quad \xi_{B,m}(t) = \frac{1}{2} \frac{X_t^* \widehat{A}_m X_t}{\|X_t\|^2}.
\end{equation}
Crucially, within the equation for $dX_t$, the vector $Y_t$ acts purely as an external driving scalar factor. Because the equation does not contain a state operator of $X_t$ inside $\xi_{A,m}(t)$, Eq.~(\ref{eq:SDE_X_interact}) remains perfectly linear with respect to the state vector $X_t$. The identical property holds for Eq.~(\ref{eq:SDE_Y_interact}) under the driving influence of $X_t$.

\subsection{Microscopic Evolution of the Joint Tensor State}
\label{53}

The joint state of the composite system is tracked via the tensor product process
$Z_t = X_t \otimes Y_t$. Applying the multi-dimensional It\^o product rule to this
semi-martingale structure yields:
\begin{equation}
\label{eq:ito_tensor_interact}
dZ_t = (dX_t \otimes Y_t) + (X_t \otimes dY_t) + (dX_t \otimes dY_t).
\end{equation}

Substituting the explicit SDE expressions from
Eqs.~(\ref{eq:SDE_X_interact}) and
(\ref{eq:SDE_Y_interact}) into the first two deterministic drift terms
of Eq.~(\ref{eq:ito_tensor_interact}), we obtain the primary
micro-scale expansion:
\begin{align}
\label{eq:tensor_expansion_raw}
(dX_t \otimes Y_t) + (X_t \otimes dY_t)
&=
-i\left(\widehat H_A\otimes I_B
+
I_A\otimes\widehat H_B\right)Z_t\,dt
\nonumber\\
&\quad
-i\sum_{m=1}^M
\left[
\xi_{A,m}(t)(\widehat A_m\otimes I_B)
+
\xi_{B,m}(t)(I_A\otimes\widehat B_m)
\right]
Z_t\,dt
\nonumber\\
&\quad
+
\left(
\sum_j
\widehat L_j\otimes I_B\,dW_{A,j}(t)
+
\sum_k
I_A\otimes\widehat M_k\,dW_{B,k}(t)
\right)
Z_t.
\end{align}

Evaluating the action of the coupling terms on $Z_t$ illustrates the
motivation for introducing the state-dependent feedback fields. For a
given interaction channel $m$, substituting
Eq.~(\ref{eq:field_definition}) gives
\begin{align}
\xi_{A,m}(t)(\widehat A_m\otimes I_B)Z_t
&=
\left(
\frac12
\frac{Y_t^*\widehat B_mY_t}{\|Y_t\|^2}
\right)
(\widehat A_mX_t\otimes Y_t)
\nonumber\\
&=
\frac12
\widehat A_mX_t
\otimes
\left(
\frac{Y_t^*\widehat B_mY_t}{\|Y_t\|^2}
Y_t
\right),
\end{align}
while, symmetrically,
\[
\xi_{B,m}(t)(I_A\otimes\widehat B_m)Z_t
=
\frac12
\left(
\frac{X_t^*\widehat A_mX_t}{\|X_t\|^2}
X_t
\right)
\otimes
\widehat B_mY_t.
\]

These identities show that the local feedback fields modulate the
single-system dynamics through the instantaneous expectation values of
the interaction observables,
 suggesting how the tensor-product interaction emerges after stochastic averaging. This observation serves as the
motivation for the chosen feedback mechanism.

The rigorous emergence of the interaction Hamiltonian is established in
the following sections by treating the feedback coefficients
$\xi_{A,m}(t)$ and $\xi_{B,m}(t)$ as stochastic semimartingales,
deriving their It\^o differentials, and evaluating the resulting
quadratic cross-variation terms.

The third term of Eq.~(\ref{eq:ito_tensor_interact}) introduces the
deterministic contribution from the quadratic cross-variation of the
noise channels:
\begin{equation}
(dX_t \otimes dY_t)
=
\sum_{j,k}
\left(
\widehat L_j\otimes\widehat M_k
\right)
Z_t\,
\left(
dW_{A,j}(t)\,
dW_{B,k}(t)
\right),
\end{equation}
which, under the subquantum noise correlation matrix
$\mathbf{\Gamma}_{AB}$, evaluates directly to
\begin{equation}
\label{eq:quadratic_noise_drift}
(dX_t \otimes dY_t)
=
\sum_{j,k}
\mathbf{\Gamma}_{AB,jk}
\left(
\widehat L_j\otimes\widehat M_k
\right)
Z_t\,dt.
\end{equation}

\subsection{It\^o Differentials of the Coupling Coefficients}
\label{54a}

The state-dependent feedback fields introduced in Section \ref{53} are
functions of the subsystem stochastic processes and therefore evolve as
stochastic semimartingales. To determine their contribution to the
microscopic tensor dynamics, we derive their exact It\^o differentials.
These differentials generate additional quadratic-variation terms, whose
macroscopic contribution will be analyzed in the following section and
shown to produce the interaction Hamiltonian.

Let us define the local quadratic forms and their corresponding denominators as:
\begin{equation}
N_{B,m}(t) = Y_t^* \widehat{B}_m Y_t, \quad D_B(t) = \|Y_t\|^2 = Y_t^* Y_t.
\end{equation}
Applying the multi-dimensional It\^o lemma to the subsystem SDE [Eq.~(\ref{eq:SDE_Y_interact})], the exact differential of the numerator evaluates to:
\begin{align}
\label{eq:dN_B}
dN_{B,m}(t) &= i Y_t^* \left[ \widehat{H}_B + \sum_{n=1}^M \xi_{B,n}(t) \widehat{B}_n, \widehat{B}_m \right] Y_t \, dt + \sum_k \gamma_k Y_t^* \widehat{M}_k^* \widehat{B}_m \widehat{M}_k Y_t \, dt \nonumber \\
& \quad + \sum_k \left( Y_t^* \widehat{M}_k^* \widehat{B}_m Y_t \, dW_{B,k}^*(t) + Y_t^* \widehat{B}_m \widehat{M}_k Y_t \, dW_{B,k}(t) \right).
\end{align}
Setting $\widehat{B}_m = I_B$ in Eq.~(\ref{eq:dN_B}) yields the standard state-norm stochastic evolution for the denominator:
\begin{equation}
\label{eq:dD_B}
dD_B(t) = \sum_k \gamma_k Y_t^* \widehat{M}_k^* \widehat{M}_k Y_t \, dt + \sum_k \left( Y_t^* \widehat{M}_k^* Y_t \, dW_{B,k}^*(t) + Y_t^* \widehat{M}_k Y_t \, dW_{B,k}(t) \right).
\end{equation}

Invoking the generalized stochastic quotient rule for $\xi_{A,m}(t) = \frac{1}{2}\frac{N_{B,m}(t)}{D_B(t)}$, the exact It\^o differential is given by:
\begin{equation}
\label{eq:d_xi_exact}
d\xi_{A,m}(t) = \frac{1}{2} \left[ \frac{dN_{B,m}(t)}{D_B(t)} - \frac{N_{B,m}(t) dD_B(t)}{D_B^2(t)} - \frac{dN_{B,m}(t) dD_B(t)}{D_B^2(t)} + \frac{N_{B,m}(t) (dD_B(t))^2}{D_B^3(t)} \right].
\end{equation}
This differential can be partitioned into a deterministic subquantum drift $g_{A,m}dt$ and a stochastic diffusion layer driven exclusively by the subsystem $B$ noise channels:
\begin{equation}
\label{eq:xi_partition}
d\xi_{A,m}(t) = g_{A,m}(X_t, Y_t) \, dt + \sum_k \mathcal{V}_{A,m,k}(Y_t) \, dW_{B,k}(t) + \sum_k \mathcal{V}_{A,m,k}^*(Y_t) \, dW_{B,k}^*(t),
\end{equation}
where $\mathcal{V}_{A,m,k}(Y_t)$ represents the localized state-dependent diffusion coefficients of the field. By symmetry, the matching differential $d\xi_{B,m}(t)$ for the second field variable accumulates drift and diffusion terms driven exclusively by the subsystem $A$ noise channels $dW_{A,j}(t)$.

\subsection{Rigorous Unification and Cross-Variation Analysis}
\label{55a}

We now incorporate the stochastic evolution of the feedback fields into
the microscopic tensor dynamics. Since the interaction terms in
Eq.~(\ref{eq:tensor_expansion_raw}) contain products of the stochastic
processes $Z_t$ and the state-dependent feedback coefficients
$\xi_{A,m}(t)$ and $\xi_{B,m}(t)$, these coefficients cannot be treated
as deterministic parameters. Applying It\^o's product rule therefore
generates additional quadratic cross-variation terms involving the
differentials $d\xi_{A,m}(t)$ and $d\xi_{B,m}(t)$. We now derive these
terms explicitly and determine their contribution to the microscopic
tensor evolution.

The correlation between the fluctuating coefficient and the state vectors generates an explicit It\^o correction matching the cross-variation:
\begin{equation}
\label{eq:cross_variation_xi}
dY_t \otimes d\xi_{A,m}(t) = \sum_{k,l} \left( \widehat{M}_k Y_t \otimes \mathcal{V}_{A,m,l}^*(Y_t) \right) \, \left( dW_{B,k}(t) \, dW_{B,l}^*(t) \right).
\end{equation}
Applying the classical noise orthogonality relation $\mathbb{E}[dW_{B,k}(t) dW_{B,l}^*(t)] = \gamma_k \delta_{kl} dt$, Eq.~(\ref{eq:cross_variation_xi}) evaluates to a deterministic drift contribution:
\begin{equation}
\mathbb{E}[dY_t \otimes d\xi_{A,m}(t)] = \mathbf{\Gamma}_{B,m}(Y_t) \, dt.
\end{equation}

Crucially, because $\xi_{A,m}(t)$ depends solely on $Y_t$, its diffusion channels are entirely independent of the subsystem $A$ noises $dW_{A,j}(t)$. Therefore, the remaining cross-variations vanish identically:
\begin{equation}
\label{eq:zero_cross}
dX_t \otimes d\xi_{A,m}(t) = \mathbf{0}, \quad \text{and} \quad dY_t \otimes d\xi_{B,m}(t) = \mathbf{0}.
\end{equation}

Although the cross-variation terms generated by the stochastic
feedback fields are expected to disappear in the hydrodynamic limit,
it is essential to estimate them explicitly. This guarantees that no
hidden $\mathcal O(1)$ contribution survives the coarse-graining
procedure and shows that the stochastic corrections associated with
$d\xi_{A,m}$ and $d\xi_{B,m}$ do not contribute to the macroscopic
generator.

To see how these non-vanishing cross-variations $\mathbf{\Gamma}_{B,m}(Y_t)$ relate to the emergent interaction Hamiltonian $\widehat{H}_{\rm int}$ at the macro-scale, we evaluate their integrated magnitudes within the windowed coarse-graining framework.

 Let $\epsilon \in (0,1)$ represent the dimensionless scale-separation parameter governing the macro-window $\Delta = O(\epsilon)$ and the micro-correlation time $\delta t = O(\epsilon^2)$. 

When computing the macro-differential of the double covariance operator
$C(\tau)$, the additional cross-variation terms generated by the
stochastic evolution of the feedback coefficients must be estimated
explicitly. Although these terms are expected to
 must be shown to disappear 
 in the
hydrodynamic limit, establishing this estimate is essential because it
guarantees that no hidden $\mathcal O(1)$ contribution survives the
coarse-graining procedure. Their integrated contribution satisfies the
strict bound
\begin{equation}
\left\|
\frac{1}{\Delta^{2}}
\int_{\tau}^{\tau+\Delta}
\int_{\tau}^{\tau+\Delta}
E\!\left[
Z_t\otimes\Gamma_{B,m}(Y_s)
\right]
\,dt\,ds
\right\|
\le
K\,\frac{\delta t}{\Delta}
=
\mathcal O(\varepsilon).
\label{60}
\end{equation}

The scaling follows from the separation of microscopic and macroscopic
time scales: the cross-variation is supported only within the microscopic
correlation interval of width $\delta t$, whereas the coarse-graining
averages over a window of size $\Delta$. Consequently, only a fraction
$\delta t/\Delta=\varepsilon$ of the integration domain contributes.
In the formal hydrodynamic limit (see Appendix~A for details), where
$\varepsilon\rightarrow0^{+}$, the additional stochastic
cross-variation terms generated by the feedback coefficients vanish
identically.

Combining the deterministic drift terms of
Eq.~(\ref{eq:tensor_expansion_raw}) with the asymptotic suppression of
these stochastic corrections, we conclude that, to leading order in the
hydrodynamic limit, the microscopic tensor evolution is governed solely
by the free subsystem Hamiltonians together with the effective
interaction Hamiltonian. Consequently, the unified microscopic
differential of the tensor product can be written as:

\begin{align}
\label{eq:dZ_unified_complete}
dZ_t &= -i \widehat{H}_{\rm tot} Z_t \, dt + \left( \sum_j \widehat{L}_j \otimes I_B \, dW_{A,j}(t) + \sum_k I_A \otimes \widehat{M}_k \, dW_{B,k}(t) \right) Z_t \nonumber \\
& \quad + \sum_{j,k} \mathbf{\Gamma}_{AB, jk} \left( \widehat{L}_j \otimes \widehat{M}_k \right) Z_t \, dt + \mathcal{O}(\epsilon) \, dt,
\end{align}
where $\widehat{H}_{\rm tot} = \widehat{H}_A \otimes I_B + I_A \otimes \widehat{H}_B + \sum_{m=1}^M \widehat{A}_m \otimes \widehat{B}_m$.

\subsection{Macro-Scale Windowed Averaging and Explicit Step-by-Step Integration}
Having established the microscopic tensor evolution to leading order,
we now perform the macroscopic coarse-graining of the corresponding
covariance operator. Following the window-averaging procedure developed
in Section \ref{section4}, we derive the effective evolution equation for the
macroscopic covariance and identify the resulting GKSL generator.

We now implement the windowed averaging technique established in
Sections \ref{section3} and \ref{section4} of the baseline model to map the micro-scale unified stochastic relation to the macroscopic observation scale. The macroscopic double covariance operator $C(\tau)$ is defined via a sliding window of length $\Delta$ as:
\begin{equation}
\label{eq:C_macro_definition}
C(\tau) = \frac{1}{\Delta^2} \int_{\tau}^{\tau+\Delta} \int_{\tau}^{\tau+\Delta} \mathbb{E}[Z_t Z_s^*] \, dt \, ds.
\end{equation}
When the macro-time is incremented by $d\tau$, the resulting variation $dC(\tau) = C(\tau + d\tau) - C(\tau)$ is obtained by applying It\^o's formula to the integrand across the shifting window limits:
\begin{equation}
\label{eq:dC_macro_ito_split}
dC(\tau) = \frac{1}{\Delta^2} \int_{\tau}^{\tau+\Delta} \int_{\tau}^{\tau+\Delta} \mathbb{E}[dZ_t Z_s^* + Z_t dZ_s^* + dZ_t dZ_s^*] \, dt \, ds.
\end{equation}

\subsubsection{Evaluation of the Coherent Term ($dZ_t Z_s^* + Z_t dZ_s^*$)}
Let us evaluate the first linear tracking component inside the expectation of Eq.~(\ref{eq:dC_macro_ito_split}). Substituting the micro-drift elements from the unified equation~(\ref{eq:dZ_unified_complete}) into the product $dZ_t Z_s^*$, the martingale noise increments $dW_{A,j}(t)$ and $dW_{B,k}(t)$ at time $t$ vanish under the expectation because they are uncorrelated with the past state vector $Z_s^*$ (where $s < t$ inside the relevant integration limits). This isolates the deterministic drifts:
\begin{equation}
\label{eq:linear_drift_Zt_step}
\mathbb{E}[dZ_t Z_s^*] = -i \widehat{H}_{\rm tot} \mathbb{E}[Z_t Z_s^*] \, dt + \sum_{j,k} \mathbf{\Gamma}_{AB, jk} \left( \widehat{L}_j \otimes \widehat{M}_k \right) \mathbb{E}[Z_t Z_s^*] \, dt + \mathcal{O}(\epsilon) \, dt.
\end{equation}
Executing the symmetric operation for the second component, $Z_t dZ_s^*$, requires taking the conjugate transpose of the drift operators acting on the right side of the state correlation product:
\begin{equation}
\label{eq:linear_drift_Zs_step}
\mathbb{E}[Z_t dZ_s^*] = \mathbb{E}[Z_t Z_s^*] \left( +i \widehat{H}_{\rm tot} \right) \, ds + \sum_{j,k} \mathbf{\Gamma}_{AB, jk}^* \mathbb{E}[Z_t Z_s^*] \left( \widehat{L}_j \otimes \widehat{M}_k \right)^* \, ds + \mathcal{O}(\epsilon) \, ds.
\end{equation}
Summing these two matching components inside the double integral, the linear time weights $dt$ and $ds$ translate directly to the macro-time differential increment $d\tau$. Grouping the total Hamiltonian matrices reveals the explicit emergence of a closed operator commutator:
\begin{align}
\label{eq:coherent_integration_completed}
\frac{1}{\Delta^2} \int_{\tau}^{\tau+\Delta} \int_{\tau}^{\tau+\Delta} \mathbb{E}[dZ_t Z_s^* + Z_t dZ_s^*]_{\text{coherent}} &= \frac{1}{\Delta^2} \int_{\tau}^{\tau+\Delta} \int_{\tau}^{\tau+\Delta} \left( -i \widehat{H}_{\rm tot} \mathbb{E}[Z_t Z_s^*] + i \mathbb{E}[Z_t Z_s^*] \widehat{H}_{\rm tot} \right) \, dt \, ds \, d\tau \nonumber \\
&= -i \left[ \widehat{H}_{\rm tot}, \frac{1}{\Delta^2} \int_{\tau}^{\tau+\Delta} \int_{\tau}^{\tau+\Delta} \mathbb{E}[Z_t Z_s^*] \, dt \, ds \right] d\tau \nonumber \\
&= -i [\widehat{H}_{\rm tot}, C(\tau)] \, d\tau.
\end{align}

\subsubsection{Evaluation of the Noise Contribution and It\^o Isometry ($dZ_t dZ_s^*$)}
The third component of Eq.~(\ref{eq:dC_macro_ito_split}) tracks the quadratic cross-variation of the subquantum noise channels. Substituting the diffusion layers of the state vector updates into the product yields:
\begin{align}
\label{eq:isometry_expansion_detailed}
\mathbb{E}[dZ_t dZ_s^*] &= \sum_{j,j'} \left( \widehat{L}_j \otimes I_B \right) \mathbb{E}[Z_t Z_s^*] \left( \widehat{L}_{j'}^* \otimes I_B \right) \mathbb{E}[dW_{A,j}(t) dW_{A,j'}^*(s)] \nonumber \\
& \quad + \sum_{k,k'} \left( I_A \otimes \widehat{M}_k \right) \mathbb{E}[Z_t Z_s^*] \left( I_A \otimes \widehat{M}_{k'}^* \right) \mathbb{E}[dW_{B,k}(t) dW_{B,k'}^*(s)] \nonumber \\
& \quad + \sum_{j,k'} \left( \widehat{L}_j \otimes I_B \right) \mathbb{E}[Z_t Z_s^*] \left( I_A \otimes \widehat{M}_{k'}^* \right) \mathbb{E}[dW_{A,j}(t) dW_{B,k'}^*(s)] \nonumber \\
& \quad + \sum_{k,j'} \left( I_A \otimes \widehat{M}_k \right) \mathbb{E}[Z_t Z_s^*] \left( \widehat{L}_{j'}^* \otimes I_B \right) \mathbb{E}[dW_{B,k}(t) dW_{A,j'}^*(s)].
\end{align}
Following the window integration derivation from Section 3, these fast microscopic noise expectations collapse into delta-like correlation spikes along the integration diagonal ($t \approx s$), scaled by the local subquantum covariance coefficients:
\begin{align}
\mathbb{E}[dW_{A,j}(t) dW_{A,j'}^*(s)] &= \mathbf{\Gamma}_{AA, jj'} \delta(t-s) \, dt \, ds, \nonumber \\
\mathbb{E}[dW_{B,k}(t) dW_{B,k'}^*(s)] &= \mathbf{\Gamma}_{BB, kk'} \delta(t-s) \, dt \, ds, \nonumber \\
\mathbb{E}[dW_{A,j}(t) dW_{B,k'}^*(s)] &= \mathbf{\Gamma}_{AB, jk'} \delta(t-s) \, dt \, ds.
\end{align}
When these delta distributions are processed by the macro-window double integration, the double-time space collapses into a single-line integration domain. Under the multi-scale scaling relation, the resulting differential weight simplifies as $\frac{(d\tau)^2}{\Delta} \to d\tau$. This isolates the explicit "sandwich" jump profile mapping over the coarse-grained covariance matrix:
\begin{align}
\label{eq:isometry_integrated_sandwich_final}
\frac{1}{\Delta^2} \int_{\tau}^{\tau+\Delta} \int_{\tau}^{\tau+\Delta} \mathbb{E}[dZ_t dZ_s^*] &= \sum_{j,j'} \mathbf{\Gamma}_{AA, jj'} \left( \widehat{L}_j \otimes I_B \right) C(\tau) \left( \widehat{L}_{j'}^* \otimes I_B \right) d\tau \nonumber \\
& \quad + \sum_{k,k'} \mathbf{\Gamma}_{BB, kk'} \left( I_A \otimes \widehat{M}_k \right) C(\tau) \left( I_A \otimes \widehat{M}_{k'}^* \right) d\tau \nonumber \\
& \quad + \sum_{j,k} \mathbf{\Gamma}_{AB, jk} \left( \widehat{L}_j \otimes \widehat{M}_k \right) C(\tau) \left( \widehat{L}_j \otimes \widehat{M}_k \right)^*, d\tau.
\end{align}

\subsubsection{Compilation of Subquantum Quadratic Drifts}
To complete the structural assembly, we must combine the non-vanishing cross-variation drift structures calculated at the end of the coherent evaluation [the second terms in Eqs.~(\ref{eq:linear_drift_Zt_step})--(\ref{eq:linear_drift_Zs_step})] with the local self-interaction reductions originating from the normalization conservation equations in Section 2. Summing the quadratic matrix drifts from the linear tracking components produces:
\begin{equation}
\label{eq:drift_cross_variation_integrated_final}
\sum_{j,k} \mathbf{\Gamma}_{AB, jk} \left( \widehat{L}_j \otimes \widehat{M}_k \right) C(\tau) \, d\tau + \sum_{j,k} \mathbf{\Gamma}_{AB, jk}^* C(\tau) \left( \widehat{L}_j \otimes \widehat{M}_k \right)^* \, d\tau.
\end{equation}
Concurrently, the local back-action of the subquantum fields on their own subsystems (the self-covariance channels $\mathbf{\Gamma}_{AA}$ and $\mathbf{\Gamma}_{BB}$) enforces a state-norm conservation condition. As derived step-by-step in Section 2, this dampening manifests as negative semi-definite anticommutators:
\begin{equation}
\label{eq:anticommutators_self_final}
- \frac{1}{2} \sum_{j,j'} \mathbf{\Gamma}_{AA, jj'} \left\{ \left( \widehat{L}_{j'}^* \widehat{L}_j \otimes I_B \right), C(\tau) \right\} d\tau - \frac{1}{2} \sum_{k,k'} \mathbf{\Gamma}_{BB, kk'} \left\{ \left( I_A \otimes \widehat{M}_{k'}^* \widehat{M}_k \right), C(\tau) \right\} d\tau.
\end{equation}

\subsection{The Interacting GKSL Equation}
We substitute the completed evaluations from Eq.~(\ref{eq:coherent_integration_completed}), Eq.~(\ref{eq:isometry_integrated_sandwich_final}), and Eqs.~(\ref{eq:drift_cross_variation_integrated_final})--(\ref{eq:anticommutators_self_final}) back into the total macro-differential decomposition [Eq.~(\ref{eq:dC_macro_ito_split})]. Dividing the unified expression through by the macro-time step $d\tau$ and taking the strict hydrodynamic scale (see Appendix A for details) separation limit ($\epsilon \to 0^+$) to drop the residual field-fluctuation error bounds, we obtain the complete interacting GKSL equation for the double covariance operator:
\begin{align}
\label{eq:master_lindblad_final_completed}
\frac{dC(\tau)}{d\tau} &= -i \left[ \left( \widehat{H}_A \otimes I_B + I_A \otimes \widehat{H}_B + \sum_{m=1}^M \widehat{A}_m \otimes \widehat{B}_m \right), C(\tau) \right] \nonumber \\
& \quad + \sum_{j,k} \mathbf{\Gamma}_{AB, jk} \left( \widehat{L}_j \otimes \widehat{M}_k \right) C(\tau) \left( \widehat{L}_j \otimes \widehat{M}_k \right)^* \nonumber \\
& \quad - \frac{1}{2} \sum_{j,j'} \mathbf{\Gamma}_{AA, jj'} \left\{ \left( \widehat{L}_{j'}^* \widehat{L}_j \otimes I_B \right), C(\tau) \right\} \nonumber \\
& \quad - \frac{1}{2} \sum_{k,k'} \mathbf{\Gamma}_{BB, kk'} \left\{ \left( I_A \otimes \widehat{M}_{k'}^* \widehat{M}_k \right), C(\tau) \right\}.
\end{align}

Equation~(\ref{eq:master_lindblad_final_completed}) represents the central result of this interacting framework. It demonstrates that both structural quantum interactions (the joint commutator) and quantum dissipative features (the jump mappings) arise under a single, consistent subquantum stochastic framework. This step-by-step jump proves that the algebraic complexity of a non-local tensor interaction operator does not require non-local fields, but instead emerges as a rigorous statistical consequence of local, state-modulated subquantum feedback lines coarse-grained over macroscopic time scales.

\section{Closed-System Limit and the Interacting von Neumann Equation}
\label{8}

The interacting open-system dynamics derived in Section~6 contains two
structurally distinct contributions: the coherent Hamiltonian part generated by
the synthesized interaction operator, and the dissipative Lindblad part generated
by the microscopic background-noise correlations. In this section we extract the
closed-system limit of that construction. Thus, rather than repeating the full
micro-to-macro derivation, we specialize the interacting GKSL equation of
Section \ref{7} to the case in which the background stochastic channels vanish.

Recall from Section \ref{7} that the interaction Hamiltonian is decomposed as
\[
\widehat H_{\mathrm{int}}
=
\sum_{m=1}^{M}
\widehat A_m\otimes\widehat B_m,
\]
and that the state-dependent microscopic feedback coefficients
$\xi_{A,m}(t)$ and $\xi_{B,m}(t)$ provide the local feedback mechanism
used to model the interaction at the microscopic level. As established
in Sections \ref{54a} and \ref{55a}, the resulting microscopic tensor evolution is
governed, to leading order in the hydrodynamic limit, by the effective
Hamiltonian
\[
\widehat H_{\mathrm{tot}}
=
\widehat H_A\otimes I_B
+
I_A\otimes\widehat H_B
+
\widehat H_{\mathrm{int}}.
\]

The closed-system limit corresponds to perfect isolation from the background
random field. Mathematically, this means that the microscopic stochastic
increments vanish:
\[
    dW_{A,j}(t)=0,
    \qquad
    dW_{B,k}(t)=0,
\]
and therefore the microscopic noise-covariance matrix vanishes:
\[
    \Gamma=0,
    \qquad
    \Gamma_{AA}=0,
    \qquad
    \Gamma_{BB}=0,
    \qquad
    \Gamma_{AB}=0.
\]
Consequently, all dissipative Lindblad channels obtained in Section~6 disappear.

At the microscopic tensor level, the stochastic quadratic-variation term also
vanishes. Hence the tensor state \(Z_t=X_t\otimes Y_t\) obeys the deterministic
equation
\[
    dZ_t
    =
    -i\widehat H_{\mathrm{tot}}Z_t\,dt.
\]
Equivalently,
\[
    \frac{dZ_t}{dt}
    =
    -i\widehat H_{\mathrm{tot}}Z_t.
\]

Applying the same double-covariance averaging procedure as in Sections~4 and~6.6,
the macroscopic covariance operator satisfies
\[
    \frac{dC(\tau)}{d\tau}
    =
    -i
    \left[
        \widehat H_{\mathrm{tot}},
        C(\tau)
    \right].
\]
The commutator on the right-hand side is traceless, and therefore
\[
    \frac{d}{d\tau}\operatorname{Tr}C(\tau)=0.
\]
Since
\[
    \widehat \rho(\tau)
    =
    \frac{C(\tau)}{\operatorname{Tr}C(\tau)},
\]
the normalized density operator obeys
\[
    \frac{d\widehat \rho(\tau)}{d\tau}
    =
    -i
    \left[
        \widehat H_{\mathrm{tot}},
        \widehat \rho(\tau)
    \right].
\]
Substituting the explicit form of the total Hamiltonian gives
\[
    \frac{d\widehat \rho(\tau)}{d\tau}
    =
    -i
    \left[
        \widehat H_A\otimes I_B
        +
        I_A\otimes \widehat H_B
        +
        \widehat H_{\mathrm{int}},
        \widehat \rho(\tau)
    \right].
\]

Thus, in interacting DCM the closed-system von Neumann equation is obtained as the
fluctuation-free limit. The
non-local interaction Hamiltonian is inherited from the state-modulated
microscopic feedback construction of Section~6, while the dissipative Lindblad
terms vanish because the background stochastic covariance matrix is set to zero.

\section{Concluding Remarks}

In this paper, we have expanded the foundational program of DCM from a static kinematic reconstruction of quantum states to a fully dynamic framework. Rather than postulating quantum states and density operators as irreducible ontological primitives, the DCM interprets them as a hierarchical statistical consequence - a covariance of covariances - of classical complex-valued stochastic processes operating across distinctly separated temporal scales. 

The primary contribution of this work is the demonstration that both open and closed standard quantum dynamics can be reconstructed from this underlying subquantum stochastic flow:

\begin{itemize}
    \item \textbf{Emergence of the GKSL Equation:} By utilizing multi-scale It\^o calculus and a sliding-window averaging scheme, we proved that a normalized double covariance operator naturally satisfies GKSL equation. Within this framework, coherent Hamiltonian evolution is shown to arise from the deterministic part of the coarse-grained subquantum flow, whereas dissipative quantum channels emerge directly from the quadratic variation of the microscopic noise correlations. 
\item \textbf{The Hydrodynamic Scaling Limit:} The transition from rapid microscopic trajectories to a stable macroscopic master equation is mathematically validated through a formal hydrodynamic limit. As the scale-separation parameter approaches zero ($\epsilon \rightarrow 0^{+}$), high-frequency subquantum phase variations smoothly wash out, leaving behind a time-local, deterministic macroscopic transport law. The erasure of microscopic memory through this temporal coarse-graining serves as the rigorous, physical justification for the Markovian nature of the emergent dynamics.
        \item \textbf{Local Synthesis of Interacting Frameworks:} By incorporating state-dependent, real-time feedback loops between subsystems, the model successfully overcomes the algebraic consistency bottleneck of local scalar trajectories. Over short integration slices, these strictly local, state-modulated fields dynamically cross-synthesize non-separable, tensor-product interaction Hamiltonians at the macroscopic level. This proves that complex quantum correlations can emerge without invoking non-local fields or explicit action-at-a-distance variables at the micro-scale.
    \item \textbf{A Unified Foundation for Closed Systems:} Finally, the model offers a seamless transition to isolated systems via the fluctuation-free limit. By setting the background zero-point field fluctuations to zero, all dissipative channels collapse entirely, and the framework naturally reduces to the standard von Neumann equation. Consequently, unitary Schr\"odinger-type evolution is not treated as an independent axiomatic postulate, but rather as an idealized limiting case of a more comprehensive open-system stochastic framework.
\end{itemize}

Ultimately, these findings reinforce the broader foundational perspective that the symbolic and operator formalisms of quantum mechanics can be systematically recovered from classical probability theory and stochastic calculus. By demonstrating that density-operator mechanics and Markovian transport rules are mathematically viable under a multi-scale classical architecture, this work marks a vital step forward in deriving the quantum mathematical apparatus directly from classical randomness. Future research will look to broaden this program by exploring non-Markovian dynamics, colored noise fields, and the explicit integration of measurement contextuality and relativistic constraints.

\appendix

\section{Hydrodynamic Limit}
\label{app:hydro}

This appendix provides the mathematical justification for the
hydrodynamic scaling used throughout the paper.

The microscopic dynamics evolves on the correlation time
\(
\delta t,
\)
whereas macroscopic observations are performed over averaging windows
of duration
\(
\Delta.
\)

The Double Covariance Model assumes the scale separation

\[
\delta t
=
O(\varepsilon^2),
\qquad
\Delta
=
O(\varepsilon),
\qquad
\varepsilon\rightarrow0.
\]

Consequently,

\[
\delta t
\ll
\Delta ,
\]

and the microscopic stochastic trajectories become rapidly oscillating
relative to the macroscopic observation scale.
The sliding-window average
\[
C_\Delta(\tau)
=
\frac1\Delta
\int_\tau^{\tau+\Delta}
Z_t\,dt
\]
therefore separates into a deterministic macroscopic component and a
rapidly fluctuating microscopic contribution.

This is the good place to discuss the origin or macro-time $\tau$ is more detail.
The macroscopic time parameter $\tau$ is introduced through the
coarse-graining procedure itself. More precisely, $\tau$ labels the
position of the averaging window
\[
[\tau,\tau+\Delta].
\]
Before passing to the continuum limit, the macroscopic evolution is
therefore naturally described by the sequence of successive
coarse-graining windows
\[
\tau_n=n\Delta,
\qquad
n=0,1,2,\ldots,
\]
so that one macroscopic time increment corresponds to one translation
of the averaging window. Accordingly, the evolution of any
coarse-grained quantity is initially represented by the finite
difference
\[
\frac{
C(\tau+\Delta)-C(\tau)
}{\Delta}.
\]
The hydrodynamic limit corresponds to the scale-separation limit
$
\varepsilon\rightarrow0^+,
$
As $\Delta\rightarrow0$, the sequence of successive averaging windows
becomes a continuous macroscopic time variable, and the finite
difference converges to the continuous derivative
\[
\lim_{\Delta\to0}
\frac{
C(\tau+\Delta)-C(\tau)
}{\Delta}
=
\frac{dC(\tau)}{d\tau}.
\]

Applying  It\^o calculus to the tensor process gives

\[
dZ_t
=
\mathcal LZ_t\,dt
+
d\xi_t ,
\]

where \(d\xi_t\) is a martingale increment satisfying

\[
\mathbb E[d\xi_t]=0.
\]

After averaging over the interval
\([\tau,\tau+\Delta]\),

\[
\frac1\Delta
\int_\tau^{\tau+\Delta}
d\xi_t
=
O(\varepsilon^{1/2}),
\]

whereas

\[
\frac1\Delta
\int_\tau^{\tau+\Delta}
\mathcal LZ_t\,dt
=
\mathcal LC_\Delta(\tau)
+
O(\varepsilon).
\]

Hence

\[
\frac{dC_\Delta}{d\tau}
=
\mathcal LC_\Delta
+
O(\varepsilon^{1/2}).
\]

Taking the limit

\[
\varepsilon\rightarrow0
\]

eliminates all residual microscopic fluctuations and yields the closed
macroscopic evolution

\[
\frac{dC}{d\tau}
=
\mathcal LC.
\]

Thus only the deterministic drift and the quadratic   It\^o variation survive
the hydrodynamic limit.

The disappearance of higher-order microscopic fluctuations explains why
the resulting macroscopic dynamics is local in time and therefore
Markovian.

\section{Why the Double Covariance Construction Produces Markovian GKSL Dynamics}

What is happening in the present framework can be understood intuitively before it is understood mathematically. The later GKSL derivation is not an independent algebraic accident added on top of the model; rather, it is already encoded in the original physical picture of scale separation and coarse-graining.

At the microscopic level, the theory assumes that the subsystem states $X_t$ and $Y_t$ evolve under rapidly fluctuating subquantum stochastic fields. These fluctuations live on an extremely short temporal scale $\delta t$, while observable physics occurs only after averaging over a much larger coarse-graining window $\Delta,$

The essential physical assumption is that the microscopic fluctuations decorrelate extremely quickly. Their memory exists only on the tiny scale $\delta t$. Once the system is observed through the larger window $\Delta$, the fine temporal structure of those fluctuations is washed out. The observer no longer sees the detailed microscopic trajectories, but only their accumulated statistical effect. This is the conceptual origin of Markovianity in the model.

The macroscopic observer does not have access to the microscopic history of the stochastic field. The coarse-graining procedure continuously erases that history. As a result, the effective evolution at the macroscopic scale depends only on the present coarse-grained covariance operator $C(\tau)$, not on the full past trajectory of the system. The dynamics therefore become local in macroscopic time.

This intuition is already visible in the earliest coarse-graining argument of the paper. The localized state-dependent fields $\xi_A(t)$ and $\xi_B(t)$ fluctuate rapidly at the microscopic level, but over short-time averaging they collapse onto effective tensor interactions. The projector identities show that the rapidly fluctuating local feedback terms become statistically equivalent to the nonlocal tensor operator: $A_m \otimes B_m)Z_t.$ At this stage, the theory is already replacing detailed microscopic structure with an averaged effective description. The local fluctuations are not tracked individually anymore; only their averaged contribution survives. This is the first appearance of the loss of microscopic memory.

The later mathematical derivation simply formalizes this physical picture rigorously.

The crucial step is the assumption that the microscopic noise correlations are delta-like:

\[
E[dW(t)dW^*(s)]
=
\Gamma \delta(t-s)\,dt\,ds.
\]

This relation means that the stochastic environment possesses no temporal memory beyond an infinitesimal instant. Correlations vanish immediately once $t \neq s$. The noise is therefore white noise in the strict stochastic sense.

Once this assumption is combined with the hydrodynamic scaling
$
\delta t = O(\varepsilon^2),
\qquad
\Delta = O(\varepsilon),
\qquad
\varepsilon \to 0^+,
$
the microscopic memory time shrinks much faster than the observational window. Consequently,
$
\frac{\delta t}{\Delta} \to 0.
$
In the hydrodynamic limit, the observer literally loses the ability to resolve microscopic temporal correlations. All finite-memory structures disappear under averaging. What survives is only the instantaneous statistical transport law governing the covariance operator.

This is precisely why the resulting equation is Markovian.

The mathematics of It\^o calculus then determines the exact form of that Markovian evolution. The quadratic variations of the Wiener processes generate the Lindblad ``sandwich'' terms, while normalization conservation produces the compensating anticommutator drift terms. Because the resulting evolution is both time-local and completely positive, the generator necessarily takes GKSL form.

Thus the appearance of the GKSL equation is not an arbitrary consequence of stochasticity alone. It arises specifically because the model combines:

\begin{enumerate}
\item rapidly decorrelating microscopic noise,
\item hydrodynamic coarse-graining,
\item conditional expectation over fast fluctuations,
\item and a strict separation between microscopic and macroscopic time scales.
\end{enumerate}

Together these mechanisms erase microscopic memory and convert the underlying stochastic dynamics into a macroscopic semigroup evolution.

In this sense, the theory is structurally analogous to ordinary hydrodynamics or kinetic theory. Molecular collisions in a fluid possess complicated microscopic histories, but after coarse-graining only local transport equations survive. Here the rapidly fluctuating subquantum trajectories play the role of microscopic molecular dynamics, while the GKSL equation emerges as the macroscopic transport equation for the covariance operator.

From this perspective, the emergence of quantum Markovianity is not a secondary feature of the model. It is the direct mathematical expression of the coarse-graining architecture itself.

Equally important is the converse observation: non-Markovian dynamics do not appear because every source of memory has been systematically removed by the assumptions of the framework. The model uses white-noise Wiener processes, instantaneous covariance structure, and a hydrodynamic limit that suppresses finite-time correlations. If one were instead to retain finite correlation times, colored noise kernels, or the full two-time covariance structure $E[Z_t Z_s^*]$ without collapsing it into a local operator, then memory effects would survive the coarse-graining procedure, and the resulting macroscopic dynamics would likely become non-Markovian rather than GKSL.

\medskip

{\bf Conflict of interest statement:} I don't have conflict of interest.

{\bf Data availability statement:} This manuscript has no associated data.

\end{document}